\patchcmd{\maketitle}{\@copyrightspace}{}{}{}
\newtheoremstyle{slanted}% <name>
{3pt}% <Space above>
{3pt}% <Space below>
{\slshape}% <Body font>
{}% <Indent amount>
{\bfseries}% <Theorem head font>
{.}% <Punctuation after theorem head>
{.5em}% <Space after theorem heading>
{}% <Theorem head spec (can be left empty, meaning `normal')>
\theoremstyle{slanted}
\newtheorem{thm}{Theorem}[section]
\newtheorem{lem}[thm]{Lemma}
\newtheorem{prop}[thm]{Proposition}
\newcommand{\subparagraph}{}
\begin{document}

%% % Copyright
%% \setcopyright{acmcopyright}
%% %\setcopyright{acmlicensed}
%% %\setcopyright{rightsretained}
%% %\setcopyright{usgov}
%% %\setcopyright{usgovmixed}
%% %\setcopyright{cagov}
%% %\setcopyright{cagovmixed}

%% % DOI
%% \doi{10.475/123_4}

%% % ISBN
%% \isbn{123-4567-24-567/08/06}

%% %Conference
%% \conferenceinfo{PLDI '13}{June 16--19, 2013, Seattle, WA, USA}

%% \acmPrice{\$15.00}

%% %
%% % --- Author Metadata here ---
%% \conferenceinfo{WOODSTOCK}{'97 El Paso, Texas USA}
%% %\CopyrightYear{2007} % Allows default copyright year (20XX) to be over-ridden - IF NEED BE.
%% %\crdata{0-12345-67-8/90/01}  % Allows default copyright data (0-89791-88-6/97/05) to be over-ridden - IF NEED BE.
%% % --- End of Author Metadata ---

\title{Fast Concurrent Cuckoo Kick-out Eviction Schemes for High-Density Tables}
\subtitle{}

\numberofauthors{1} %  in this sample file, there are a *total*
% of EIGHT authors. SIX appear on the 'first-page' (for formatting
% reasons) and the remaining two appear in the \additionalauthors section.
%
\author{
% You can go ahead and credit any number of authors here,
% e.g. one 'row of three' or two rows (consisting of one row of three
% and a second row of one, two or three).
%
% The command \alignauthor (no curly braces needed) should
% precede each author name, affiliation/snail-mail address and
% e-mail address. Additionally, tag each line of
% affiliation/address with \affaddr, and tag the
% e-mail address with \email.
%
% 1st. author
\alignauthor
William Kuszmaul\\
       \affaddr{Stanford University}\\
       \affaddr{450 Serra Mall}\\
       \affaddr{Stanford, CA 94305}\\
       \email{kuszmaul@stanford.edu}
}
% There's nothing stopping you putting the seventh, eighth, etc.
% author on the opening page (as the 'third row') but we ask,
% for aesthetic reasons that you place these 'additional authors'
% in the \additional authors block, viz.
\date{5 February 2016}
% Just remember to make sure that the TOTAL number of authors
% is the number that will appear on the first page PLUS the
% number that will appear in the \additionalauthors section.

\maketitle
\begin{abstract}
  
  Cuckoo hashing guarantees constant-time lookups regardless of table
  density, making it a viable candidate for high-density
  tables. Cuckoo hashing insertions perform poorly at high table
  densities, however. In this paper, we mitigate this problem through
  the introduction of novel kick-out eviction
  algorithms. Experimentally, our algorithms reduce the number of bins
  viewed per insertion for high-density tables by as much as a factor
  of ten.

  We also introduce an optimistic concurrency scheme for transactional
  multi-writer cuckoo hash tables (not using hardware transactional
  memory). For delete-light workloads, one of our kick-out algorithms
  avoids all competition between insertions with high probability, and
  significantly reduces transaction-abort frequency. This result is
  extended to arbitrary workloads using a new synchronization
  mechanism called a claim flag.
\end{abstract}

%
% The code below should be generated by the tool at
% http://dl.acm.org/ccs.cfm
% Please copy and paste the code instead of the example below. 
%
\begin{CCSXML}
<ccs2012>
<concept>
<concept_id>10002951.10002952.10003190.10003193.10003427</concept_id>
<concept_desc>Information systems~Data locking</concept_desc>
<concept_significance>300</concept_significance>
</concept>
<concept>
<concept_id>10002951.10002952.10003190.10003195.10010836</concept_id>
<concept_desc>Information systems~Key-value stores</concept_desc>
<concept_significance>300</concept_significance>
</concept>
</ccs2012>
\end{CCSXML}

\ccsdesc[300]{Information systems~Data locking}
\ccsdesc[300]{Information systems~Key-value stores}

%
% End generated code
%

%
%  Use this command to print the description
%
\printccsdesc

% We no longer use \terms command
%\terms{Theory}

\keywords{Cuckoo hashing; kick-out victim; transactional correctness;
  serializability; transaction abort; optimistic concurrency}

\clearpage
\clearpage

\section{Introduction}

Unlike traditional hashing, Cuckoo hashing maps each key to two
distinct bins using two hash functions.  To insert a key, we simply
look through the bins identified by the two hash functions and insert
into the first containing a free slot. However, if neither bin
contains an empty slot, then we pick one of two bins and
\emph{kick-out} one of the keys it contains. The displaced key, the
\emph{victim}, is then recursively reinserted into the table. The
sequence of kick-outs that results from the initial insert is called a
\emph{kick-out chain}.

Introduced in 2004, Cuckoo hashing guarantees constant-time reads,
overwrites, and deletes \cite{cuckoo04}. In particular, once a key is
inserted into the table, it is guaranteed to be among the two bins to
which it is hashed -- regardless of the table density.  In settings
where memory is expensive this makes Cuckoo hashing especially
valuable. See, for example, the flash-based key-value store system
FlashStore \cite{flash10}.
%% TODO: Be careful here. Flash is not more expensive than memory. But flash DB is for situations where memory is tight. 

High performance implementations of cuckoo hashing in both serial
\cite{ross07} and parallel \cite{fan13, li14} have proven cuckoo
hashing to have practical potential. Typically, designs choose for
each bin to contain either four or eight slots.

Unfortunately, whereas cuckoo hashing reads are constant time
regardless of table-density, inserts become very slow for high-density
tables. The kick-out chain resulting from an insertion in a
high-density table results in the viewing of dozens (or hundreds) of
bins, making cache-friendly or multi-threaded cuckoo hashing difficult
\cite{li14}. Commonly kick-out victims are selected in one of two
manners: \emph{random kicking} selects the victim randomly; and
\emph{breadth-first search} performs a search to find the shortest
possible kick-out chain. Experimentally, at high table densities both
kick-out eviction algorithms are equally bad in terms of bins viewed
per insertion (Figure~\ref{fig_A}).

This short paper demonstrates that simple algorithmic changes to both
breadth-first search and random kicking can yield significant
improvements. We introduce five mechanisms for reducing bins viewed
per insertion: ghost insertions, sorted search, queue-kicking, and
rattle-kicking.

\begin{itemize}
\item \textbf{Ghost insertions: }Ghost insertions allow a record to
  reside in two bins at once, marked as a duplicate in each (Section
 ~\ref{secghost}). Duplicates make good kick-out victims because they
  are guaranteed not to cause additional kick-outs.

\item \textbf{Sorted Search: }Whereas breadth-first search essentially
  maintains a queue of records whose children have not yet been
  examined, sorted search instead maintains a list sorted by a
  statistic called \emph{spawn count} (Section
 ~\ref{sec_sorted_search}); records with small spawn counts are more
  likely to lead to small kick-out chains.

\item \textbf{Queue-Kicking:} To select kick-out victims,
  queue-kicking picks the record present in the bin for the longest
  (Section~\ref{sec_queue_walk}). For delete-light workloads, by
  preemptively updating the queue, queue-kicking can be further
  harnessed to prevent concurrently planned kick-out chains from
  overlapping.

\item \textbf{Rattle-Kicking: }In $d$-ary cuckoo hashing, records are
  hashed to $d$ single-slot bins \cite{dary}. Rattle-kicking ensures
  that each of a record's hash functions are used once before any of them
  are used a second time (Section~\ref{sec_dary_walk}).
\end{itemize}

Although we sometimes are able to provide theoretical justification
for our results, our primary method of evaluation is experimental. At
high table densities (using bins of size four) we obtain a factor of
ten improvement by combining sorted search and ghost insertions. At
high table densities for $4$-ary cuckoo hashing, rattle-kicking,
sorted search, and an algorithm of Khosla \cite{khosla13} each perform
approximately three times better than standard kick-out schemes.

In Section~\ref{secconcurrent}, we implement a transactional
multi-writer cuckoo hash table (without using hardware transactional
memory). For delete-light workloads, with 15 threads concurrently
running 100-operation transactions, queue-kicking reduces the
frequency of transaction aborts by a factor of around 256 in
low-density tables, and 32 in high-density tables (Section
\ref{secconcurrentexp}). To achieve similar results for arbitrary
workloads, we introduce a concurrency mechanism called a claim flag.
\begin{itemize}
  \item \textbf{Claim Flags: } By marking which slots have been
    scheduled for kick-out chains (and other write operations), claim
    flags eliminate all competition between inserts with high
    probability (even for polynomially many threads).
\end{itemize}
In our experimental evaluation for delete-heavy workloads, with 15
threads concurrently running 100-operation transactions, claim flags
reduce transaction-abort frequency by a factor of around 512 in
low-density tables, and 20 in high-density tables
(Section~\ref{secconcurrentexp}). Claim flags are a particularly
interesting direction for future work since they can likely be applied
to other transactional data structures.

The rest of this paper is organized as follows. Section
\ref{sec_past_work} discusses past work on fast cuckoo
insertions. Section~\ref{secexp} experimentally evaluates
ghost-insertions, sorted search, queue-kicking, and
rattle-kicking. Section~\ref{secghost} discusses ghost-insertions.
Section~\ref{sec_sorted_search} elaborates on sorted search, and
discusses its applications to concurrent cuckoo hashing. Section
\ref{sec_queue_walk} discusses queue-kicking, and proves for
delete-light workloads that a variant of queue-kicking eliminates all
competition between concurrent inserts with high probability. Section
\ref{sec_dary_walk} discusses rattle-kicking and compares it to an
algorithm of Khosla \cite{khosla13}. Section~\ref{secconcurrent}
implements an optimistic concurrency scheme for transactional cuckoo
hashing and introduces claim-flags; with high probability, claim-flags
eliminate all competition between inserts, regardless of
workload. Finally, Section~\ref{seccon} concludes with directions for
future work.

\section{Past work on fast cuckoo insertions} \label{sec_past_work}

Several techniques can be used to reduce kick-out chain length in
serial cuckoo hashing, the simplest of which would be to just use
large bins. One common technique is \emph{load balancing} in which,
upon inserting a record, one inserts into the least full of its two
bins. Another is \emph{stashing}, in which cuckoo chains longer than a
certain maximum length are aborted and the homeless key is inserted
into a separate stash \cite{kirsch09}. This helps especially in the
extremely rare event that no valid kick-out chain exists or if a
kick-out chain is abnormally long. Stashing was used, for example, in
the flash-based key-value store system FlashStore \cite{flash10}. One
of the cleverest modifications to cuckoo hashing is that of
\cite{lehman09}, in which rather than hashing keys to bins of size
$B$, one hashes keys to $B$ adjacent slots -- essentially allowing
bins to overlap. This simple improvement yields surprising performance
improvements. Although these techniques can be useful, to maintain
generality we evaluate our kick-out eviction algorithms in the
standard setting. In Section~\ref{secconcurrent}, however, we do take
advantage of load balancing to reduce contention in the multi-writer
setting.

Until now, not much work has been done on finding good kick-out
eviction schemes.
%% d-ary hashing paper [??]). This likely stems from the fact that they
%% have very similar performance in two-hash cuckoo hashing. But when
%% $H>2$, random walk can require the viewing of far more bins than does
%% breadth-first search.
For $d$-ary cuckoo hashing (i.e., using bins of size one and $d$ hash
functions), however, algorithmic improvements to random-kicking have
been proposed by Khosla \cite{khosla13}, the performance of which we
compare to our results in Section~\ref{sec_queue_walk}.

\section{Experimental Evaluation of Kick-Out Eviction Algorithms}\label{secexp}

In the following sections, we will introduce a number of techniques
for reducing bins-viewed-per-insert in cuckoo hashing. In this section
we compare these techniques experimentally. Note that, in order to
facilitate comparison between algorithms, our graphs are on
logarithmic scales.

Figure~\ref{fig_A} compares random kicking, breadth-first-search,
sorted search (Section~\ref{sec_sorted_search}), and
queue-kicking\footnote{Our implementation of queue-kicking
  additionally uses hit-balancing, which is described in Section
 ~\ref{sec_queue_walk}.} (Section~\ref{sec_queue_walk}). For each
algorithm, we fill 1,000 hash tables to 97.5\% full and graph the
average number of bins viewed for insertions at each density. Each
table consists of $2^{13}$ bins, each of which has four slots. For
each algorithm, we also consider in Figure~\ref{fig_B} the version of
the algorithm in which ghost insertions have been implemented (Section
\ref{secghost}). Figure~\ref{fig_A} additionally tests the sorted
search / breadth-first search hybrid discussed in Section
\ref{sec_sorted_search}.

Some implementations of cuckoo hashing use a variant called
\emph{d-ary cuckoo hashing}. In this variant, bins are of size one
and, in order to still get good performance, $d$ hash functions are
used. Figure~\ref{fig_C} compares kick-out algorithms for 4-ary cuckoo
hashing, evaluating random-kicking, breadth-first search, sorted
search, rattle-kicking, and an algorithm due to Khosla
\cite{khosla13}, the final two of which are discussed in Section
\ref{sec_dary_walk}.  Just as in Figure~\ref{fig_A},
Figure~\ref{fig_C} reports the average number of bins viewed per
insertion over the course of 1000 trials; we use tables with $2^{13}$
single-slot bins.

Our experiments use uniformly randomly generated hashes. Moreover,
insertions assume that the key being inserted is not already present,
and are not responsible for verifying this themselves.

All of our experiments appear to scale. That is, if one runs the same
experiments on tables with arbitrarily many bins, then the data-points
will remain essentially unchanged.

\section{Ghost Insertions}\label{secghost}

In this section, we introduce the notion of ghost insertion, which can
be used to improve an any kick-out algorithm's performance for
high-density tables. When we insert a record $R_1$, we may have room
for it in both bins $b_1$ and $b_2$ to which it is hashed. Preferably,
we would insert $R_1$ into whichever of $b_1$ or $b_2$ there will be
the least demand for later. Ghost insertions simulate this by
temporarily inserting the record in both bins. Each copy of the record
is marked as a \emph{duplicate}, indicating to future insertions that
it can easily be removed to make space for another record.
  
  %% When we perform a ghost-insertion, we mark each copy of the record a
  %% \emph{duplicate}, indicating to future insertions that it can easily
  %% be removed to make space for another record. A slot containing a
  %% duplicate record is said to be an \emph{ghost-insert} slot.

Later we may find ourselves trying to insert a record $R_2$ into bin
$b_1$, only to discover that no slots are available. Fortunately,
because $R_1$ appears in both $b_1$ and $b_2$, we can simply remove it
from $b_1$. At this point, the copy of $R_1$ in $b_2$ is marked as no
longer being a duplicate record.

Surprisingly, all kick-out chains are guaranteed to terminate in a bin
containing a duplicate.

\begin{prop}\label{propghost}
  Let $C$ be a cuckoo-hash table built using only insert operations,
  and with ghost-insertions enabled. Let $b$ be the final bin in a
  kick-out chain in $C$. Then prior to the kick-out chain, $b$
  contained at least one duplicate.
\end{prop}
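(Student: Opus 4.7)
The plan is to argue by contradiction: suppose the kick-out chain terminates at bin $b$ and, contrary to the claim, $b$ held no duplicate beforehand. From the paper's description (``trying to insert a record $R_2$ into bin $b_1$, only to discover that no slots are available, \ldots we can simply remove it from $b_1$''), the termination rule is to place the record into an empty slot if one exists, and otherwise to remove a duplicate. With no duplicate in $b$, termination must be via an empty slot, so $b$ currently has at least one empty slot.

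The key observation is that the empty-slot count of every bin is monotonically non-increasing in time: $C$ is insert-only; ghost insertions, single insertions, and chain terminations into empty slots each consume an empty slot, while chain continuations, dup-removals, and a duplicate losing its sibling copy all swap one occupant for another in a slot that remains occupied. In particular, $b$ has had $\ge 1$ empty slot at every earlier moment as well. Now let $r$ be the record placed at $b$ and $b'$ its other hashed bin; $r$ was just kicked from $b'$, so $r$ sits singly in $b'$ and is absent from $b$. Because $b$ had an empty slot when $r$ was originally inserted, that insertion could not have been via a chain or a single-into-$b'$ (both would require $b$ to be full), so $r$ was originally placed in $b$ --- either as a ghost duplicate in both bins, or as a single inserted directly into $b$ when $b'$ was full.

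The main obstacle is then ruling out $r$ subsequently leaving $b$. In the ghost sub-case, $r$'s copy in $b$ can be removed only by a later chain terminating at $b$ via dup-removal selecting $r$; by the empty-slot-first rule this requires $b$ to be full at that moment, contradicting monotonicity. In the single-into-$b$ sub-case, $r$ can leave $b$ only by being kicked out mid-chain, which requires $b$ to be a non-terminating intermediate bin (full, no duplicate) of that chain, again contradicting monotonicity. Both sub-cases fail, so the assumption is impossible and $b$ must have contained a duplicate prior to the chain.
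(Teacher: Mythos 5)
Your proof is correct and takes essentially the same route as the paper's: both arguments rest on the observations that (i) in an insert-only table with ghost insertions, free slots never increase, so a duplicate-free final bin has always had a free slot and hence never had any record (duplicate or otherwise) evicted from it, and (ii) the incoming record, being hashed to that bin, must therefore have received a copy there at its original insertion and still reside there as a duplicate --- a contradiction. The paper merely packages this as the invariant that no bin is simultaneously ``available,'' ``reachable,'' and duplicate-free, whereas you argue directly on the final bin, making the monotonicity and the removal-requires-fullness steps explicit.
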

\begin{proof}
  Call a bin \emph{available} if it contains either a free slot or a
  duplicate. Call a bin \emph{reachable} if some record in $C$ is
  hashed to the bin but is contained (perhaps as a duplicate) in a
  different bin.

  It suffices to show that there does not exist a bin $b$ which is
  available, reachable, and free of duplicates. Suppose otherwise. As
  an available and duplicate-free bin, $b$ must contain at least one
  free slot. It follows that no record (duplicate or otherwise) has
  ever been kicked out of $b$. But since $b$ is reachable, some record
  $r$ is hashed to $b$ and contained in a different bin. However,
  because $b$ contained a free slot when $r$ was inserted, and since
  no record has ever been kicked out of $b$, record $r$ must be
  present in $b$ as a duplicate, a contradiction.
\end{proof}

Our experiments in Section~\ref{secexp} show that ghost insertions can
significantly reduce the number of bins viewed per insertion for each
kick-out eviction scheme. At high table densities, random kicking is
improved by a factor of roughly 2.5, and breadth-first search is
improved by a bit less than a factor of two. 

The factor of two is hinted at by Proposition~\ref{propghost}. At high
table densities, almost all available bins contain only one free or
duplicate slot. Proposition~\ref{propghost} tells us that only those
bins containing a duplicate are reachable. Thus ghost-insertions
essentially double the number of available reachable bins.

\section{Sorted Search} \label{sec_sorted_search}

In this section, we introduce sorted search, a kick-out
eviction algorithm based on breadth-first search. Experimentally, at
high table densities, sorted search can reduce the bins viewed per
insertion by a factor of eight (Figure~\ref{fig_A}).

We begin with a convention. A record's \emph{children} is the set
of slots which are in bins hashed to the record but not containing the
record. When conducting a search algorithm for a kick-out path, we use
the term \emph{spawning a record} to mean looking at the record's
children as part of the search. In turn, the record's children are its
\emph{spawn victims}.

Breadth-first search essentially maintains a queue of records that
have been viewed but not yet spawned in the search. At each step in
the search, we pop a record from the queue and check if any of its
children slots are free. If any are, then the search is
complete. Otherwise, we update the queue and continue.

%% In
%% breadth-first search, one effectively maintains a queue of records
%% that have been viewed but not spawned in the search (with that queue
%% initially containing the contents of the two bins to which the record
%% being inserted is hashed). For each record in the queue, we remember
%% the bins to which it is hashed, the bin containing it, and the
%% record's slot in that bin. We also remember which record spawned it in
%% the search. At each step in the search, we pop a record from the queue
%% and check if any of its children slots are free. If any of them are
%% free, then the search is complete. Otherwise, one pushes the children
%% onto the queue and continues.

Rather than maintaining a queue, sorted search maintains a list sorted
by some statistic. At each step, rather than picking the record which
has been in the list the longest to spawn next (as in breadth-first
search), we pick the smallest record according to the statistic.

After experimenting with many statistics, we have found one in
particular, which we call spawn count, to be the most effective
by far. The \emph{spawn count} of a record in bin $b$ is the number of times
since the conception of the hash table that any search has previously
spawned any record that was, at the time, contained in
$b$. Surprisingly, even statistics using information specific to the
record (and not just the bin) appear unable to perform better than
spawn count.

%% It is surprising that what appears to be
%% the best statistic uses information only about the bin containing a
%% record, rather than about the record itself.

For delete-heavy workloads, spawn count could potentially get
large. One might choose to cap it at a given value. In our
(delete-free) experiments, four bits (per bin) is easily sufficient to
store spawn count. Consequently bucket sort can be used to efficiently
maintain the sorted list during each search.

%% We remark that if a workload were to contain a large number of
%% deletes, then by the time the table got to high-density, spawn count
%% could be significantly larger than in our experiments.  It would be
%% interesting to study whether there a better statistic than spawn count
%% to use in this case.

Unlike breadth-first search, which guarantees to return the shortest
possible cuckoo path, sorted search could potentially return a path
containing a cycle -- that is, we may accidentally try to kick-out the
same record from the same bin twice in the same kick-out chain. One
low-overhead method to avoid revisiting buckets is to maintain a small
hash table of the already visited buckets; in particular, the index of
a bucket can be used as its hash. In order to fairly compare
breadth-first search and sorted search, we eliminate the revisiting of
buckets in our implementations of both.

Sorted search maintains two significant concurrency advantages of
breadth-first search over random kicking. It can be implemented to
take advantage of prefetching, and it provides short kick-out
chains.

%% It appears to yield a kick-out chain the length of which is roughly
%% logarithmic to the number of bins viewed in finding it. And the
%% algorithm can be designed to take advantage of prefetching.

\textbf{Prefetching:} Unlike random walking, breadth-first search
tells us the next few bins that will be fetched before we actually
need to fetch them. In turn, this facilitates the use of prefetching to
reduce fetch latency \cite{li14}.

Sorted search can also take advantage of prefetching. Indeed, sorted
search may be modified to spawn the $k$ smallest-statistic records at
a time rather than just one (for an arbitrary $k$). Moreover, if more
than two hash functions are being used, one can take advantage of
prefetching even when just fetching the children of a single record.

\textbf{Short Critical Path:} Breadth-first search yields a kick-out
chain whose size is logarithmic in the number of bins visited
during the search. Previously, \cite{li14} utilized this to
significantly reduce critical path size in concurrent cuckoo hashing.

Experimentally, sorted search also satisfies this property. In
Figure~\ref{fig_D}, we compare the length of kick-out chains generated
by sorted search and breadth-first search both with and without
ghost-insertions (using the same experimental set-up as in
Figure~\ref{fig_A} from Section~\ref{secexp}). As can be seen by
comparing Figures~\ref{fig_A},~\ref{fig_B}, and~\ref{fig_C}, at 97.5\%
table density, all four search-based algorithms produce kick-out
chains an order of magnitude smaller than those produced by walk-based
algorithms.

If we model "picking the smallest-statistic bin" as "picking a random
bin," and assume that each spawn is equally likely to terminate the
search, then we can prove that sorted search's kick-out chains are
logarithmic in comparison to the number of bins viewed.

\begin{thm}
Suppose we search for a kick-out chain by, at each step in the search,
spawning a record selected randomly from those viewed but not yet
spawned. Moreover, suppose that each spawn is equally likely to
terminate the search.
  
Then the expected length of the resulting kick-out chain is $O(\lg
r)$, where $r$ is the total number of spawns in the search.
\end{thm}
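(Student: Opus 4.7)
My plan is to translate the search into a random tree-growth process and then bound the expected depth of the terminating spawn by an $O(\log r)$ harmonic sum obtained from a simple recurrence on the mean depth of the current pool.

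First I would set up the model. Maintain a pool of viewed-but-not-yet-spawned records, initialized with the children of the root (the record being inserted). At each step, pop a uniformly random element $s_t$ of the pool and append its newly viewed children. After $r$ such steps we obtain a spawn tree whose nodes are $s_1,\ldots,s_r$, with $\mathrm{depth}(s_t)$ equal to the length of the kick-out chain that would end at $s_t$. Because each spawn is equally likely to terminate, it suffices to bound the average
\[
\frac{1}{r}\sum_{t=1}^{r}\E\bigl[\mathrm{depth}(s_t)\bigr] = O(\log r),
\]
since this is (up to constants) the expected kick-out chain length under either the uniform-terminating-spawn or the independent-Bernoulli-termination reading of the hypothesis.

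Next I would analyze the mean pool depth. Let $S_t$ be the pool size at time $t$ and $T_t$ the sum of depths of pool elements at time $t$; when each spawn adds $b \ge 2$ children, $S_t = S_0 + t(b-1)$ is deterministic. Spawning a uniformly random pool element of depth $D_t$ deletes one depth-$D_t$ term and inserts $b$ depth-$(D_t+1)$ terms, so $T_{t+1} = T_t + (b-1)D_t + b$. Since $\E[D_t \mid T_t] = T_t/S_t$, the quantity $m_t := \E[T_t]/S_t$ satisfies the clean recurrence $m_{t+1} = m_t + b/S_{t+1}$, which telescopes to $m_t = m_0 + b\sum_{s=1}^{t} 1/S_s = O(\log t)$. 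Because $s_{t+1}$ is a uniform pool element at time $t$, $\E[\mathrm{depth}(s_{t+1})] = m_t = O(\log t)$; averaging over $t$ gives the claimed $O(\log r)$ bound.

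The main obstacle is the constant-branching idealization. In the true process a spawn may produce fewer than $b$ children when some slots are empty or hold duplicates, and those are precisely the events that trigger termination. I would handle this by coupling to an idealized $b$-branching process that pads each spawn with phantom children: the real $T_t$ is dominated by the idealized one, so the same recurrence is still an upper bound on $m_t$. As long as the effective branching stays bounded away from $1$ (which holds whenever $b \ge 2$ and the table is not saturated with duplicates), $S_t$ still grows linearly in $t$ and the harmonic-sum analysis goes through unchanged.
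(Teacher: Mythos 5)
Your proposal is correct and follows essentially the same route as the paper's proof: your $m_t$ is exactly the paper's $f(k)$ (mean depth of the viewed-but-unspawned pool), and your recurrence $m_{t+1}=m_t+b/S_{t+1}$ with $b=B(H-1)$ and linearly growing pool size $S_t$ is the paper's recurrence, telescoped to the same harmonic bound $O(\lg r)$. Your extra care (deriving the recurrence via $T_t$, averaging over the termination time, and coupling away the variable-branching issue) only tightens steps the paper treats informally.
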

\begin{proof}
Let $B$ be the size of each bin and $H$ be the number of hash
functions. We require that $B(H-1) > 1$. Then for $k \ge 0$,g at step
$k$ we have viewed $2B + kB(H-1)$ records. Since each step (after
zero) spawns one record, that leaves $2B + kB(H-1)-k$ records which
have been viewed but not spawned. Let $f(k)$ be the average
search-tree depth of the records so far viewed but not yet spawned
after $k$ steps, with $f(0) = 0$. The $k$-th step in the algorithm
eliminates one record with expected depth $f(k-1)$ and introduces
$B(H-1)$ records with expected depth $f(k-1)+1$ each. Since after the
$k$-th step there are $2B + kB(H-1)-k$ unspawned but viewed records,
we can treat all but $B(H-1)$ of those records as having average depth
$f(k-1)$ and $B(H-1)$ as having average depth $f(k-1)+1$, yielding
total average depth of $$f(k) = f(k-1) + \dfrac{B(H-1)}{2B + kB(H-1) -
  k}$$ $$\le f(k-1) + \dfrac{1}{k - k/(B(H - 1))}.$$ Since $B(H-1)
\geq 2$, it follows that $f(k) \leq 2 (1+1/2+1/3+\cdots 1/k) \approx
2\ln k$. Thus in a search with $r$ spawns, the expected depth of the
final spawn is $O(\lg r)$.
\end{proof}

One may wish to use a sorted search and breadth-first search
hybrid. Here, a record's position in the sorted list is determined
first by the depth in the search at which it was a spawn victim, and
then secondarily by its spawn count. This erases the necessity of
monitoring for cycles and guarantees the kick-out path-length being
logarithmic in bins viewed. As expected, the hybrid's performance
is experimentally between that of sorted search and breadth-first
search (Section~\ref{secexp}).

%% \section{Queue-Kicking and Rattle-Kicking} \label{secwalk}

%% In Subsection~\ref{sec_queue_walk} we describe queue-kicking in the
%% standard cuckoo hashing context. Additionally, we find a simple
%% technique for avoiding conflicts between insertions in a concurrent
%% setting. In Subsection~\ref{sec_dary_walk}, we describe queue-kicking
%% in the d-ary cuckoo hashing context, which we call
%% rattle-kicking. Surprisingly, we get improvements over random-walking
%% even for smaller table densities.

\section{Queue-Kicking} \label{sec_queue_walk} 

Random walking is easy to implement and extremely low
overhead. Unfortunately, at high table densities it does poorly at
finding short kick-out chains. In this section, we introduce
queue-kicking, which improves upon random kicking in order to
significantly reduce bins viewed per insertion.

In order to develop a variant of random walking which will find
shorter kick-out chains, we must first understand why random walking
does poorly. Suppose one starts with an empty hash table (using bins
of size four), and then fills the table to 97\% full. Then the total
number of kick-outs during all of the insertions combined will only be
around the half the total number of records
inserted\footnote{Experimentally, regardless of table size, each bin
  tends to average 2.07 kick-outs.}
  %% %This computation can be run by adding (- touches) to the return
%% value of insert in ghost_walk.cpp }.

But if we pick kick-outs within a bin randomly, balls-in-bins
suggests that some slots will get picked several times while others
won't get picked at all. Consequently, we will end up kicking records
back into a bin they previously got kicked out of, when we instead
could have kicked out a record who had never been kicked around
before.

A simple resolution is to implement each bin as a queue. Rather then
selecting kick-out victims randomly, one simply selects the record
which has been in the bin the longest. We call this
\emph{queue-kicking}. Queue-kicking's performance is additionally
bolstered by the fact that the earlier a record is inserted into a
bin, the less likely it is there only because the
other bin to which it was hashed was full.

Our experiments in Section~\ref{secexp} show that experimentally
queue-kicking results in far shorter kick-out chains than does
random-walking, at least for high-density tables. We may not be the
first to observe the benefits of queue-kicking; in fact, it is the
kick-out scheme used by Kennith Ross in his high-performance cuckoo
hash table implementation \cite{ross07}. We are the first, however, to
explicitly observe its benefits -- previously it appeared only in a
single clause of a single sentence.

For delete-light loads, one could simulate queue-kicking by using a
counter for each bin. This counter, known as the \emph{hit-counter},
is incremented each time a record is inserted into a bin. The record
is then placed in the slot whose index is congruent to the hit-counter
modulo the size of the bin, kicking out another record if
needed.\footnote{In our experiments the hit-counter can easily be
  stored within a single byte (similarly to spawn-count for sorted
  search). One could choose to store the hit-counter modulo the number
  of slots in each bin.}

Using hit-counters would not work well for delete-heavy workloads
since they could result in a kick-out in a bin which has free
slots. For a workload not containing many deletes, however,
hit-counters have two surprising consequences, the first a small
benefit for performance, and the second a major benefit for concurrency.

\textbf{Hit Balancing: } Previously we discussed a common technique
called \emph{load balancing} in which, upon inserting a record, one
views both hashed bins and picks the less full one. When both
bins are full, however, we can generalize load balancing to
\emph{hit balancing} by picking the bin with a smaller
hit-counter. So that they can utilize hit balancing, our experiments
in Section~\ref{secexp} use hit counters to implement queue-kicking.

\textbf{Scheduling Kick-Out Chains: } One problem with long kick-out
chains is that, in multi-threaded systems, two concurrent insertions
may plan overlapping chains, forcing one of the insertions to start
over \cite{li14}. As we will see in Section ~\ref{secconcurrent}, this
can lead to a problematic number of transaction aborts in the context of
transactional cuckoo hashing. In particular, in transactional cuckoo
hashing, hundreds or thousands of inserts may be planned before any of
them are committed to the table. Any overlaps in their kick-out chains
will lead to transaction aborts. Worse still, if two kick-out chains
terminate in the same bin which contains only a single free slot, they
are guaranteed to compete for the slot.

  Interestingly, this problem can be resolved using hit-counters. If
  hit-counters are incremented with atomic fetch-and-adds, then we are
  guaranteed that any $B$ threads trying to insert into the same bin
  will all be assigned different slots. (Here, $B$ is the number of
  slots in the bin.) In practice, for delete-light workloads, this is
  enough to reduce transaction aborts by a factor of 30 (Section
 ~\ref{secconcurrent}).
%%   Using the standard random
%%   walk protocol, if two threads concurrently perform insertions into
%%   the same full bin, they will conflict with probability 1/4. In other
%%   words, the two threads have a 1/4 chance of picking the same record
%%   (out of the four slots) for a kick-out victim. The situation is even
%%   worse if the bin has only a few free slots. For example, if the
%%   threads concurrently perform insertions into the same bin which has
%%   one free slot, then they are guaranteed to both pick that slot for
%%   their insertion. One of the threads will have to abort.

%% On the other hand, using our algorithm, the only buckets that are at
%% risk of a conflict are ones in which five distinct inserts occur
%% within a very short amount of time. In other words, the hit-counter
%% must be incremented 5 times before it will reuse a slot.

In fact, we can prove with high probability that the use of
hit-counters completely eliminates overlaps between concurrent
inserts. This is formalized through the following
lemma.

\begin{lem}\label{lemhighprob}
  Consider a cuckoo hash table with $n$ bins, each containing $B$
  slots. Suppose $t$ threads concurrently each perform $D$ operations,
  with each operation randomly touching at most $j$ slots. If $D \in
  O(n)$, then with probability $O((tj)^{B + 1}n^{1-B})$ no bin will
  ever simultaneously receive more than $B$ touches.
\end{lem}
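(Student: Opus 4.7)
The plan is to bound the failure probability by a union bound over the discrete events at which the per-bin touch count can change. Since each thread executes its operations serially, the number of concurrent touches to any given bin can change only at the start or end of some operation, so if some bin ever reaches $B+1$ simultaneous touches there must be a first such instant, and that instant must coincide with the start of some operation. There are at most $tD$ operation-start events across all threads, giving the natural union-bound parameter.

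Fix one such event: an operation $O^*$ begins at time $m$. At that instant, at most $t-1$ operations from other threads are concurrently active. The bad event at $m$ requires a bin $b$ such that $O^*$ and some $B$ of the currently-active other operations all touch $b$. Modeling the hashes as uniformly random and independent across distinct keys, any fixed operation touches a specific bin with probability at most $j/n$ (each of its $j$ slot choices lands in a given bin's $B$ slots out of $nB$ total with probability $1/n$, union-bounded over the $j$ slots). Union-bounding over the $n$ choices of $b$ and over the $\binom{t-1}{B} \le t^B$ subsets of $B$ other active operations, and using independence across operations, the per-event probability is at most $n \cdot t^B \cdot (j/n)^{B+1} = t^B j^{B+1}/n^B$.

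Multiplying by the $tD$ operation-start events and substituting $D = O(n)$ then yields total failure probability $O(t^{B+1} D j^{B+1}/n^B) = O((tj)^{B+1} n^{1-B})$, as claimed.

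The main subtlety I expect to navigate is that the set of operations concurrently active at time $m$ depends on a schedule which may be adversarial and could in principle correlate with the random hash values. The per-event union bound above sidesteps this by quantifying over \emph{all} $\binom{t-1}{B}$ subsets of other operations, so the bound holds uniformly over all possible schedules. The only scheduling assumption actually used is that each thread contributes at most one active operation at any instant, which is immediate from serial execution within a thread. A secondary modeling choice worth pinning down is the convention that a single operation contributes at most one touch to a given bin; under any natural reading this only strengthens the above estimates (since the $j/n$ bound already comes from a union over the slot choices within one operation).
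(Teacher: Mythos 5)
Your proof is correct and takes essentially the same approach as the paper's: a union bound over bins and over the time points at which a bin could first become over-subscribed, bounding the probability that $B+1$ independent uniformly random touches land in a common bin, and then invoking $D \in O(n)$. Your bookkeeping (operation-start events combined with a union over the ${t-1 \choose B}$ choices of concurrently active operations) is just a slightly more explicit packaging of the paper's per-snapshot count of $B+1$ colliding touches among the at most $t\cdot j$ simultaneous ones, and it arrives at the same $D\,(tj)^{B+1}/n^{B} = O((tj)^{B+1}n^{1-B})$ bound.
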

\begin{proof}
  Observe that at most $t \cdot j$ touches can occur
  simultaneously. Consider an arbitrary bin $b$. Out of the $n^{t\cdot
    j}$ ways to assign $t \cdot j$ distinct touches to $n$ bins, at
  most $${{t \cdot j} \choose {B + 1}} n^{t \cdot j - (B + 1)}$$ of
  them assign more than $B$ touches to bin $b$. By the union bound,
  the probability of any bin being over-subscribed is at most
  $$\frac{{{t \cdot j} \choose {B + 1}}}{n^B} \le t \cdot j \cdot
  \left(\frac{t \cdot j}{n}\right)^B.$$ Applying the union bound to
  the $D$ operations, the probability of any bin is ever
  over-subscribed is at most
  $$D \cdot t \cdot j \cdot \left( \frac{t \cdot j}{n} \right)^B \in
  O((tj)^{B + 1}n^{1-B}).$$
\end{proof}

\section{Rattle-Kicking} \label{sec_dary_walk} 

Until now, we have focused on cuckoo hash tables using two hash
functions and bins with several slots. Some applications, however, use
\emph{$d$-ary cuckoo hashing}, in which bins contain one slot only and
$d$ hash functions are used \cite{dary}.\footnote{Two examples of
  systems using this technique include Shore-MT \cite{flash10} and
  FlashStore \cite{johnson09}.}

For $d$-ary cuckoo hashing, random kicking performs poorly with
respect to bins viewed per insertion, even for tables at smaller
densities. In this section, we resolve this with a simple technique
called rattle-kicking.

Refer to the $d$ hash functions as $h_0$, $h_2$, ..., $h_{d-1}$. For
each key $x$ in the hash table, we maintain a \emph{rattle-counter}
denoted by $r(x)$ which is initially zero and is incremented every
time we try to insert $x$ into a bin. When a key $x$ is either
introduced to the table or displaced from its previous bin, we try to
insert it into bin $h_{r(x) \pmod d}$. If that bin already contains a
key $y$, however, then whichever of $x$ or $y$ has a higher
rattle-counter gets to stay in the bin. We increment the other's
rattle-counter and recursively try to insert it into another bin.

%% When inserting a key $x$, one
%% initially attempts to insert $x$ into $h_0(x)$. If $h_0(x)$ is a free
%% bin, we're done. What about when a key is inserted or kicked into a
%% bin that isn't empty? We look at whether that key or the key already
%% in the bin has a larger rattle-counter. The key with the larger
%% rattle-counter gets to reside in the bin, and the other key, key $y$,
%% is kicked out (and its rattle-counter is thus incremented). We then
%% recursively attempt to insert $y$ into bin $h_{\delta(y) \pmod
%%   d}(y)$. In other words, we use the hash function indexed by the
%% rattle-counter of $y$ modulo the number of hash functions.

We refer to the act of picking the key with the smaller rattle-counter
as \emph{rattle-balancing}. Similarly to load balancing,
rattle-balancing ensures that there is little variance between
rattle-counters of records throughout the hash table.

At the cost of a small overhead (rattle-counters) per key, our
algorithm appears to significantly reduce the length of kick-out
chains. (See Section~\ref{secexp}.) At table density below 85\%, our
algorithm performs close to as best as one could hope. If a table has
density $\delta$, then we would hope each insert and kick-out would
have a $(1-\delta)$ chance of not inducing another kick-out. In turn,
this would yield chains of average length $1/(1-\delta)$.

Random kicking performs poorly because it reuses hash functions
unnecessarily.  According to our data for $4$-ary cuckoo hashing, when
we fill a table from empty to 95\% using random kicking, each record
has visited a total of only around 5.6 bins\footnote{This average
  appears to be mostly independent of table size.}. But if we stick
5.6 balls randomly in four cups (the number of hash functions), some
cups will get several balls while others may be totally empty. When we
try a hash function we've never used before for a key, we are
effectively picking a random bucket in the hash table in which to
insert that record -- this will lead to the $1/(1-\delta)$ performance that
we desire. But when we instead reuse a hash function, we are sending
the record to a slot we already know is full.

Rattle counters solve this problem by going through all the hash
functions for a record before reusing any. Moreover, rattle-balancing
keeps rattle-counters for all records relatively balanced, rather than
some records having many unused hash functions while others have
none. Using rattle-counters to fill a table to 95\% full results in
records having used only 3.3 of their hash functions on
average. Consequently, most records have not reused a single hash
function.

%% Whereas queue-kicking had to be adapted to rattle-kicking in order to
%% function in the d-ary setting, sorted search can be used regardless of
%% bin-size of hash-function number. As one can see in Figure~\ref{fig_C} from
%% Section~\ref{secexp}, however, rattle-kicking out-performs sorted
%% search for d-ary hashing.

In addition to testing random-walking, rattle-kicking, breadth-first
search, and sorted search for d-ary cuckoo hashing (Section
\ref{secexp}), we test an interesting kick-out eviction algorithm due
to Khosla \cite{khosla13}. For high-density tables Khosla's algorithm
performs similarly to rattle-kicking.

The computations in Section~\ref{secexp} assume a record's prior
absence for each insertion. Consequently, when the table is near
empty, our algorithms view fewer than $d$ slots per
insertion. Khosla's algorithm must maintain certain
invariants which prevent it from easily doing this  \cite{khosla13}.

Rattle-kicking can be thought of as a $d$-ary analogue for
queue-kicking. Is there a natural analogue for Khosla's algorithm as well?

\section{Transactional Multi-Writer Cuckoo Hashing}\label{secconcurrent} 

In sections~\ref{sec_sorted_search} and~\ref{sec_queue_walk}, we
discussed the relation of several of our kick-out algorithms to the
multi-writer setting. In this section we apply queue-kicking to the
multi-writer setting, and introduce a novel concurrency scheme which
allows a multi-writer Cuckoo hash table to achieve transactional
correctness without being inhibited by transaction aborts. Past
authors have proven cuckoo hashing's potential as a multi-writer table
with threads performing single operations at a time \cite{fan13,
  li14}. Our work extends this to transactional Cuckoo hashing, where
each thread wants transactions comprising many operations to be
performed atomically.

Each thread in a multi-writer hash table may wish to perform a series
of transactions, where each transaction comprises a collection of
reads, writes, deletes, and overwrites which are dependent on each
other. For example, if a thread were managing bank data, it might wish
to move \$1000 from Ann's account to Bill's account under the
condition that Bill's account contains less than \$1000 and his friend
Liz's account contains less than \$500. Transactional correctness is a
property which prevents inter-dependencies between concurrent
transactions from producing unexpected results. A table is said to be
\emph{transactional} if after all the transactions have been performed
in parallel, there exists a serial schedule for the transactions which
would have resulted in the same end-result hash table.

%% \url{https://en.wikipedia.org/wiki/Serializability} and
%%     {http://www.cs.unc.edu/~dewan/242/s01/notes/trans/node3.html}. The
%%     second link also links to a nice brief discussion of transaction
%%     models.

%% Traditionally, transactional correctness is achieved using two-phase
%% locking, in which a thread locks all of the table slots involved in a
%% transaction, then performs the transaction, and then unlocks the
%% slots. Unfortunately, this involves locking even slots which appear in
%% the transaction for read-only purposes. More importantly, two-phase
%% locking requires knowing ahead of time which slots are needed for the
%% transaction. For example one would be unable to select a slot to read
%% based on the contents of another slot. One advantage of our system
%% over using two-phase locking is that it avoids both of these problems.

Past non-transactional multi-threaded Cuckoo hash tables have avoided
locking slots for read operations by using version counters on each
slot, and having reads check that the counter did not change between
the start and end of the read \cite{fan13, li14}. In order to achieve
transactional correctness, we use a more complicated notion of version
IDing based on an optimistic-concurrently scheme of SILO, a
high-performance in-memory database \cite{silo13}. In particular, all
of the slots in a transaction are assigned a new version ID at the end
of the transaction; and the final hash table will be the same as the
one obtained from any serial scheduling of the transactions weakly
ordered by version-ID number.

Our experiments find that a naive adaption of SILO's optimistic
concurrency scheme to Cuckoo hashing results in frequent transaction
aborts, even at low table densities. We introduce a number of
techniques for reducing these aborts, including an application of
queue-kicking (introduced in Section \ref{sec_queue_walk}), local
retries, and claim flags. Using fifteen threads to concurrently build
a table, our techniques can be combined to reduce abort-frequency by a
factor of more than 7,000 for low-density tables and 450 for
high-density tables (Figure \ref{fig_F}). In particular, with high
probability, claim flags allow threads to schedule inserts and
kick-out chains without competing. In our experiments for delete-heavy
loads, claim flags bring the percent of transaction attempts that
abort down from more than 5\% to approximately .02\% at low table
densities, and from more than 10\% to approximately .5\% at high table
densities (Subsection \ref{secconcurrentexp}). These percentages will
vary depending on the size of the table, the number of threads, and
the number of operations per transaction. For example, a larger table
with the same number of threads and number of operations per
transaction will yield smaller percentages.

\subsection{A Naive Optimistic Concurrency Scheme}

We implemented the following transactional multi-writer cuckoo hash
table, using an optimistic concurrency scheme based on the techniques
of in-memory database SILO \cite{silo13}.

Each slot and each bin has its own spin lock and version ID number,
the latter of which guards the former. Each transaction has three
stages: \\ \textbf{Stage 1: The Planning Stage. }The thread performs a
transaction comprising inserts, look-ups, overwrites, and deletes;
each of which may depend on the outcome of previous operations. In
this stage, however, the transaction does not apply its edits to the
table. In order to achieve transactional correctness, the edits are
applied in Stage 3.

Prior to reading a slot's contents, the slot's version ID is added to
the \emph{transaction's read set}. Whenever the transaction plans to
edit a slot, however, the slot's version ID is instead added to the
\emph{transaction's write set}\footnote{After reading a slot's contents, the
transaction should check that the version ID is unchanged, in order to
guarantee the integrity of the data read.}. In addition, the transaction
will sometimes wish to verify that a record $r$ is not in a bin
$b$. Instead of using the version ID of each of $b$'s slots, the
transaction inserts $b$'s version ID into the read set. In turn,
insertions add to the write set the version ID of the bin into which
they plan to insert. \\ \textbf{Stage 2: The locking and verification
  stage. } The transaction locks the slots and bins whose version IDs
are in the write set\footnote{This must be done in a globally sorted
  order to avoid deadlock.}. The transaction then checks whether any
version ID in either the read set or write set has changed since being
added to the set. If any have, then the transaction releases all of
its locks and aborts. Version IDs in the write set can be checked
immediately after the corresponding slot/bin has been
locked. \\ \textbf{Stage 3: The apply/commit stage. } The transaction
applies each of the write-operations from Stage 1. Additionally, prior
to releasing the locks, the transaction updates the version ID of
every slot and bin in the write set to a new version ID, the
\emph{transaction ID}. The transaction ID is the maximum of all the
IDs in the read set along with all the IDs in the write set, except
additionally incremented by one\footnote{Additionally, we may require
  that it is at least one greater than the transaction IDs of any
  previous transactions run by the same thread.}.

Observe that each transaction effectively occurs atomically, in the
instant when the final lock in Stage 2 is taken.  Moreover, no other
transaction with a smaller transaction ID can possibly have made edits
depending on those of the current transaction, a property which can
prove useful for logging and snapshots.

Surprisingly, even at low table-densities transaction aborts can pose
a major obstacle. More than 40\% of transaction attempts abort, for
example, when 15 threads concurrently perform 100-operation
transactions to fill a $2^{17}$ slot table to 60\% full using a
delete-heavy workload. (See Subsection \ref{secconcurrentexp} for more
details.) Most of these aborts are overly conservative, and can be
eliminated through the use of local retries, which we will introduce
in the next section.

Even with local retries, however, more than 5\% of attempted
transactions abort, and for a table filled to 95\% full, more than
10\% of attempted transactions abort. In the next section, we will
eliminate almost all of these aborts using a new mechanism called a
claim flag. Note that these percentages are specific to the parameters
of our experiment, and would shrink, for example, if we increased the
table size without changing the number of threads or size of
transactions.

%% numaborts /(((2^14 * 8 / (100 * 2/7)) * density) + numaborts)

\subsection{Mechanisms for Reducing Aborts}\label{secmech}

The concurrency scheme described so far falls apart for high-density
tables due to an excess of transaction aborts. In addition to
queue-kicking (which was discussed in Section~\ref{sec_queue_walk}),
we experimentally evaluate three mechanisms for reducing transaction
aborts: local retries, system-transaction kick-outs, and claim flags.

\textbf{Local Retries: } Local retries eliminate a class of aborts
which are overly conservative. In Stage 1, the success of an
insertion, or the failure of a read, overwrite, or delete is
contingent upon a particular record being absent from the table. In
Stage 2, the continued absence of this record is confirmed using bin
IDs in the read and write sets. If one of these IDs has changed,
however, then the entire transaction aborts. Most likely, however, the
bin-transaction-ID change was not a result of the relevant record
appearing in the table. Thus when a thread $T$ finds during Stage 2
that a bin version ID used to confirm the absence of a record $k$ has
changed, thread $T$'s transaction should not abort. Instead it should
release its locks, verify that $k$ is still absent from the bin, and
restart Stage 2 using the bin's new version ID. This is referred to as
a \emph{local retry}.

Local retries for bin IDs in the write set can be performed without
releasing already acquired locks. In particular, by locking bins
before slots in Stage 2, one can perform local retries (without risk
of deadlock) on write-set bin IDs right before locking the
corresponding bin.

%% IN OUR NAIVE VERSION, WE DO A SIMPLER SCHEME WHERE WE DO LOCAL
%% RETRIES INITIALLY, THEN TAKE LOCKS ETC.. AND IF WE NEED ANOTHER
%% LOCAL RETRY AFTER WE START TAKING LOCKS THEN WE RESTART THE COMMIT
%% PROCESS. IN OUR SOPHISTICATED VERSION, WE DO AS DESCRIBED
%% HERE. THEY'RE FUNCTIONALLY EQUIVALENT SINCE THERE IS NO NOTICIBLE
%% DIFFERENCE BETWEEN DOING LOCAL RETRIES BEFORE LOCKS VS DURING LOCKS

\textbf{System-Transaction Kick-Outs: } Kick-out chains need not wait
until Stage 3 to be performed. Instead, they can be performed as
individual transactions during Stage 1. In fact, by performing a
kick-out chain in reverse order, each individual kick-out can be
performed as a mini-transaction.

Surprisingly, without the aid of claim flags or queue-kicking,
system-transaction kick-outs would actually \emph{increase} the number
of kick-out chain collisions. Competition for the slot freed by the
kick-out chain leads to aborts.

\textbf{Claim Flags: } In stage 1, whenever a thread adds a slot to
the write set, it atomically \emph{claims} the slot using a one-bit
flag. Claims do not restrict read-access to a slot or to its version
ID.

Whenever an insertion or kick-out chain comes across a claimed slot,
it simply ignores that slot. If all of the slots in a bin are claimed
or locked, then the insertion / kick-out chain is forced to
abort. Lemma~\ref{lemhighprob} tells us with high probability,
however, this will never happen within a $n$-bin table's entire
$O(n)$-operation lifespan.

In order to avoid deadlock, no operation can ever wait on a claim
flag. Instead, when overwrites, or deletes see a claimed slot,
they must abort the transaction. In the same situation without claim
flags, however, one of the conflicting transactions would have been
forced to abort later regardless of the concurrency scheme.

Once a claim flag is taken, the slot's version ID is guaranteed not to
change before the transaction's completion. Consequently, slots need
not be locked until Stage 4, and slot locks need not be acquired
atomically. 

Besides the occurrence that a bin is completely claimed / locked
(which Lemma~\ref{lemhighprob} tells us is rare), claim flags and
local retries leave only four possible causes for aborts: a slot is read by one transaction but then modified by another;
% could actually be your own transaction's insert's system transaction
 two deletes / overwrites try to edit the same record concurrently; a
 delete / overwrite tries to edit a slot already claimed for a
 kick-out chain; or a transaction inserts a key which another transaction
 is relying on as not being present.

\subsection{Experimental Results on Multi-Threaded Cuckoo Hashing}\label{secconcurrentexp}

We gather data on six variants of transactional multi-threaded cuckoo
hashing. The first uses neither local retries, claiming, system
transaction kick-out chains, nor queue-kicking. We significantly
improve this with the second, which adds in local retries. In the
third, we introduce queue-kicking (using hit-counters). For a
delete-light workload this both decimates transaction aborts and
shortens kick-out chains. In the fourth implementation, we introduce
system-transaction kick-out chains, achieving additional
improvements. The fifth implementation replaces queue-kicking and
system-transaction kick-outs with claim flags, resulting good
performance regardless of workload. Finally, the sixth implementation
introduces system-transaction kick-out chains again\footnote{Note that
 unlike for queue-kicking, abort reductions from system-transaction
  kick-out chains are not already obtained by claim flags.}.

We run tests on a table with $2^{14}$ bins, each containing $8$
slots. We consider two workloads, one of which is light on deletes and
one of which is delete heavy; the relative frequencies of inserts,
deletes, overwrites, and reads is 1:0:1:1 and 2:1:2:2 in the two
tests. Our experiments use 15 threads, each of which runs transactions
batching together 100 operations. Each insert uses a randomly
generated pair of hashes and a randomly generated integer payload;
insertions use load balancing, but not hit balancing. To randomly
select a record for an overwrite, delete, or read, a thread picks out
of all records ever inserted into the table by the thread.

Figure~\ref{fig_E} shows the results for the delete-light load, and
Figure~\ref{fig_F} shows the results for the delete-heavy load. In
each case, we examine the total number of aborts for a table filled to
each density, and average our results over 100 trials. Note that both
graphs are on a log scale.
%%  (interesting in particular since
%% non-claiming gets actual aborts by table 80 percent full; should graph
%% logarithmically), and number of threads (Seems less interesting since
%% neither scales too horribly). (4 graphs total).

In general, one expects the total number of aborts to grow
proportionally to the number of threads and number of operations per
transaction, but to not depend on table size. Running our experiments
on tables with varying parameters indicates that this is roughly the
case, but it would be interesting to study the scaling properties of
each implementation in more detail.

%% The exception to the rule is Implementation ??, which does not use
%% local retries. As the size of the table increases, the number of
%% aborts at a fixed density appears to also gradually increase. (Not
%% totally sure why... maybe having a tremendous number of aborts really
%% affects things in some way if that number of aborts means a large
%% percentage of transactions are aborting, but that percentage becomes
%% smaller as the table gets bigger. There are LOTS of possibilities
%% though.... It really is the only case which doesn't scale though (as
%% far as I can see), so it doesn't matter much. But I should run
%% \textbf{full} computations for several sizes just in case. And if
%% things start to look different I should briefly mention how things
%% scale. If anything our final version seems to get a little better as
%% the table gets better. So there's nothing to worry about, and if we
%% don't have room int eh paper we shouldn't spend any time on it, as
%% long as th story doesn't change wildly for larger sizes. There would
%% be more story impact, for example, from changing what we're storing in
%% the table or how long transactions take to plan.)

\section{Directions for Future Work}\label{seccon}

Several directions for future work present themselves.

Our results are mostly experimental. Can one theoretically quantify
the improvements obtained by our algorithms? Ghost insertions seem
particularly interesting to study.

What other data structures can claim flags be applied to in order to
achieve transactional correctness with low risk of transaction aborts?
Is their applicability widespread?

Our tests use uniformly random hash functions. How do the relative
performances of the algorithms change when bias hash functions are
introduced?

Our tests compare kick-out eviction schemes by bins viewed per
insertion, and concurrency schemes by transaction-abort frequency. It
would be interesting to also compare runtime performances in various
settings.

\section{Acknowledgments}\label{secawk}

This research was conducted at HP Labs. The author thanks Harumi Kuno,
Hideaki Kimura, and Bradley Kuszmaul for offering advice on exposition
and directions of research. The author additionally thanks Hideaki
Kimura for suggesting the notion of system-transaction kick-out
chains.

\newpage
\bibliographystyle{abbrv}
\bibliography{cuckoobib} 
\clearpage
\clearpage

\section{Figures}

\begin{figure}[!htb]
  \includegraphics[scale=.3]{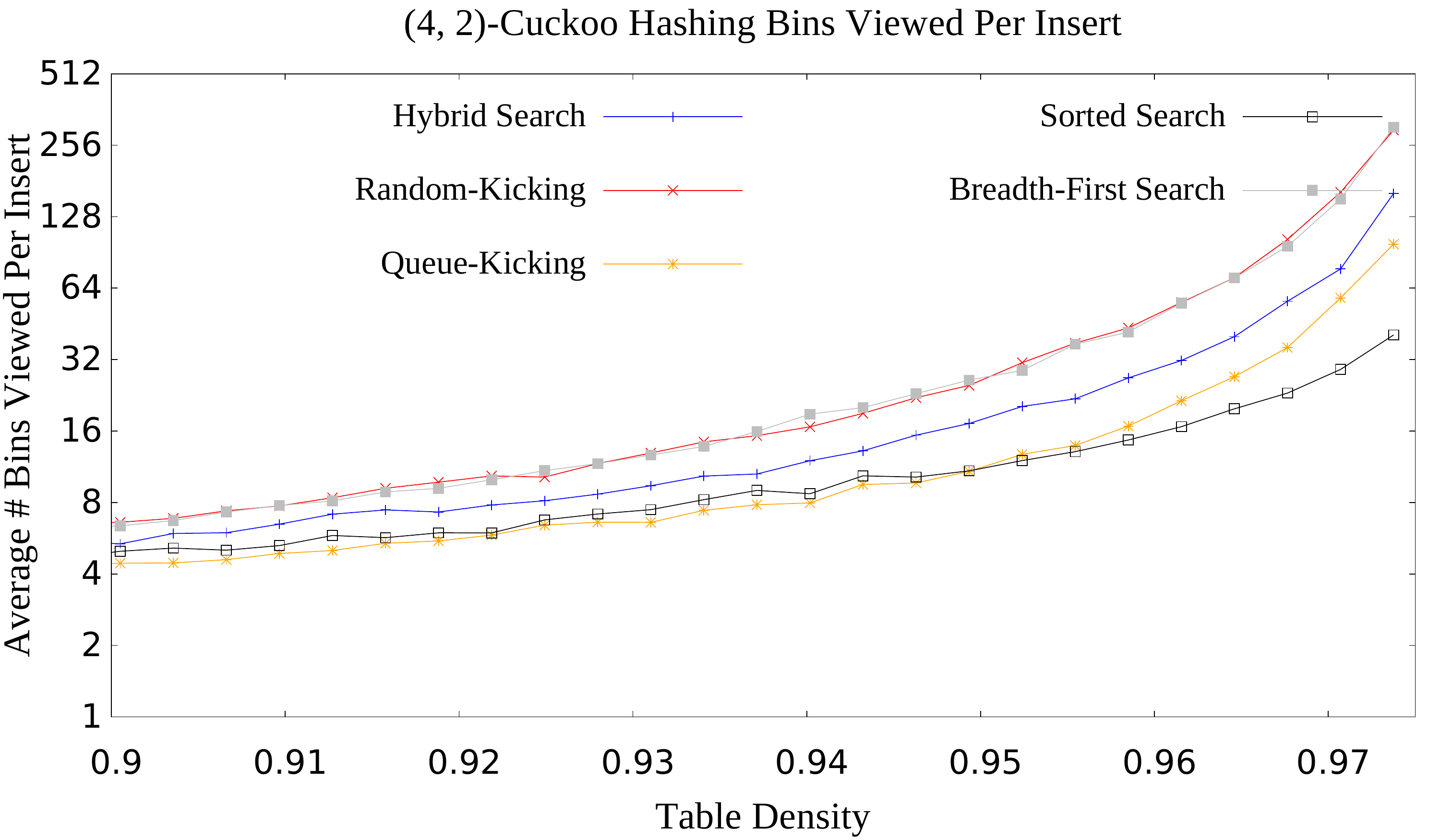}
  \caption{The average number of bins viewed per insertion at varying
    table densities, for various kick-out algorithms.}
  \label{fig_A}
\end{figure}

\begin{figure}[!htb]
  \includegraphics[scale=.3]{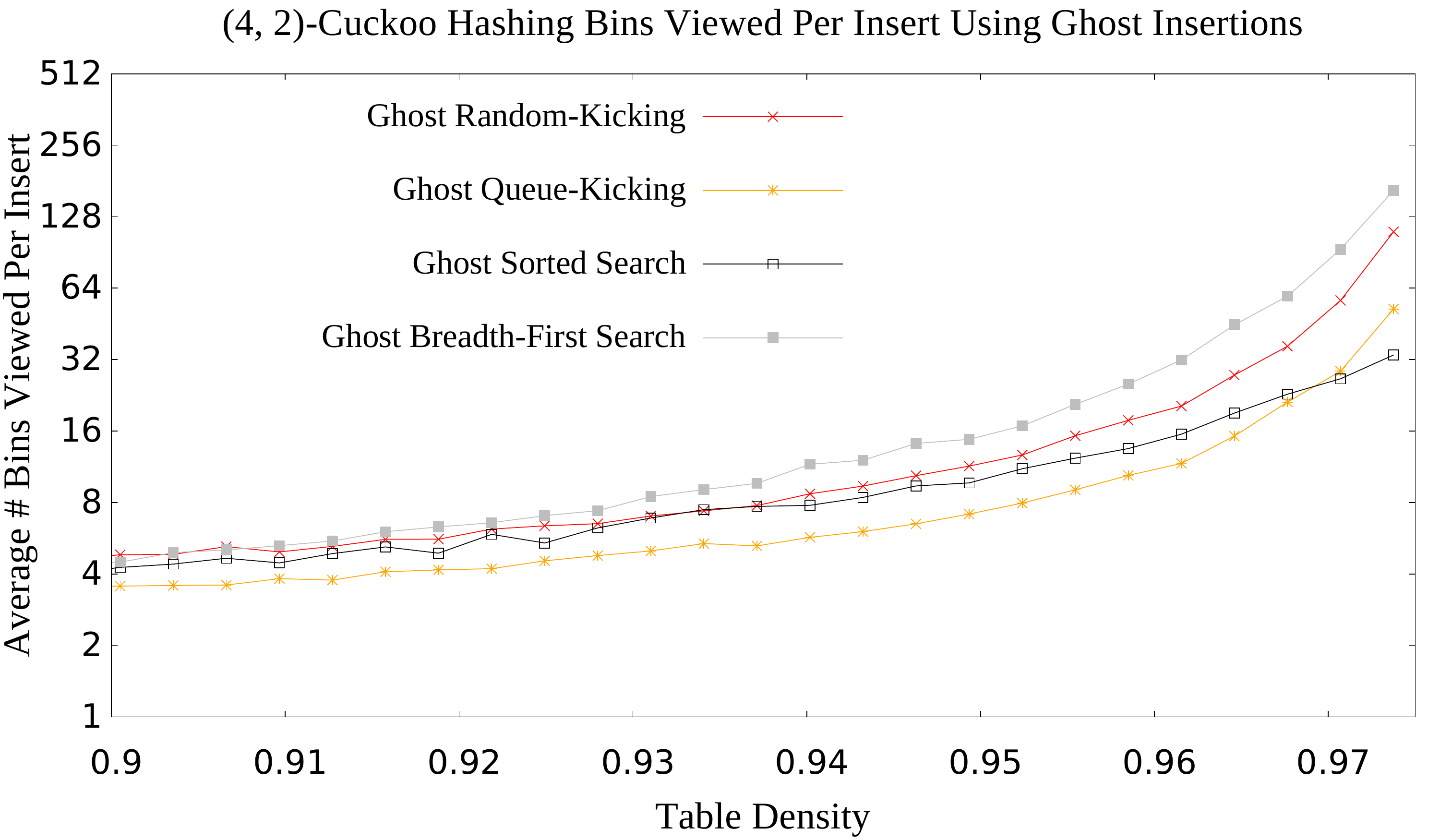}
    \caption{The average number of bins viewed per insertion at
      varying table densities, for various kick-out algorithms using
      ghost-insertions.}
  \label{fig_B}
\end{figure}

\begin{figure}[!htb]
  \includegraphics[scale=.3]{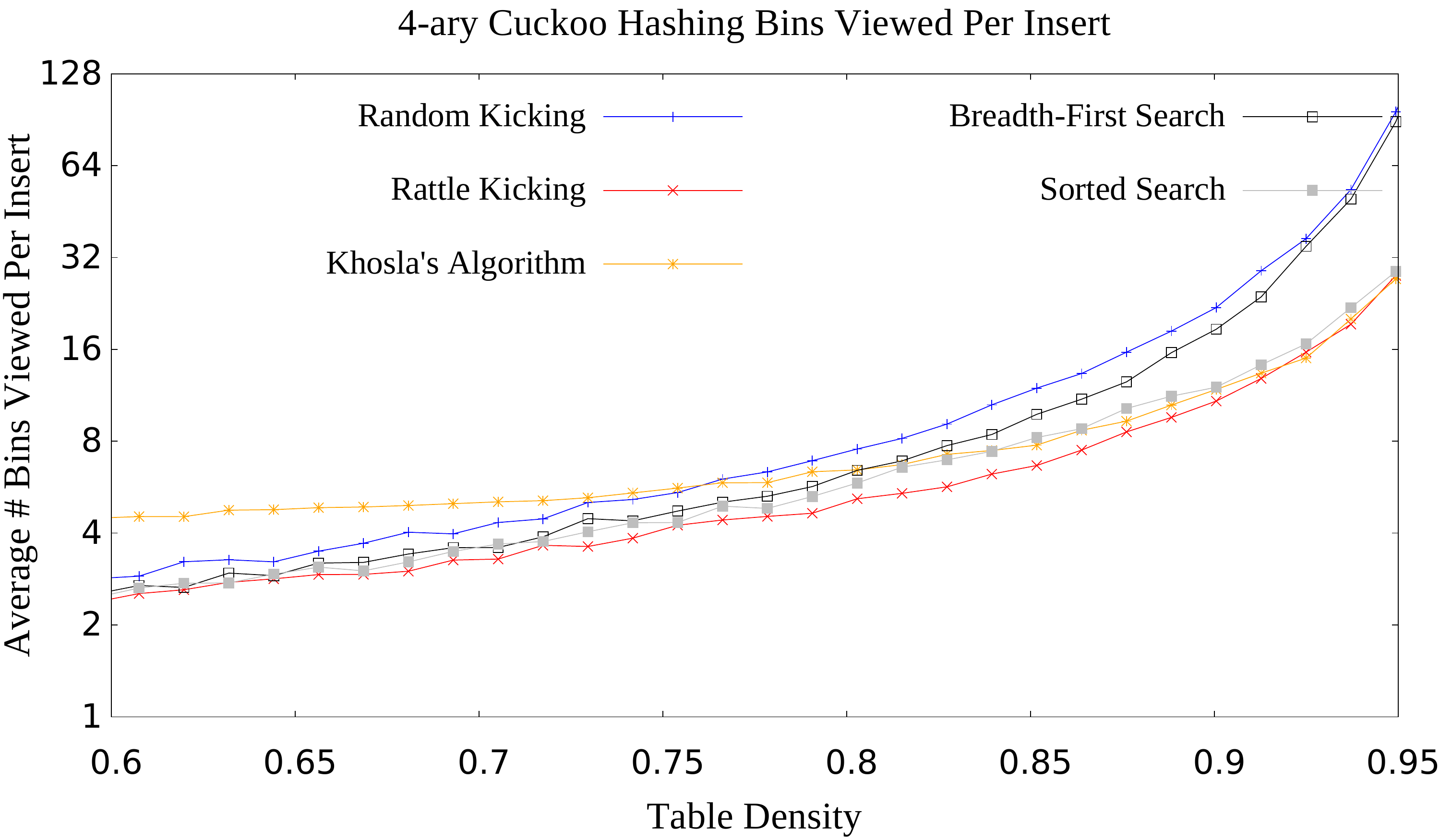}
      \caption{The average number of bins viewed per insertion at
        varying table densities, for various $4$-ary kick-out
        algorithms.}
  \label{fig_C}
\end{figure}

\begin{figure}[!htb]
  \includegraphics[scale=.3]{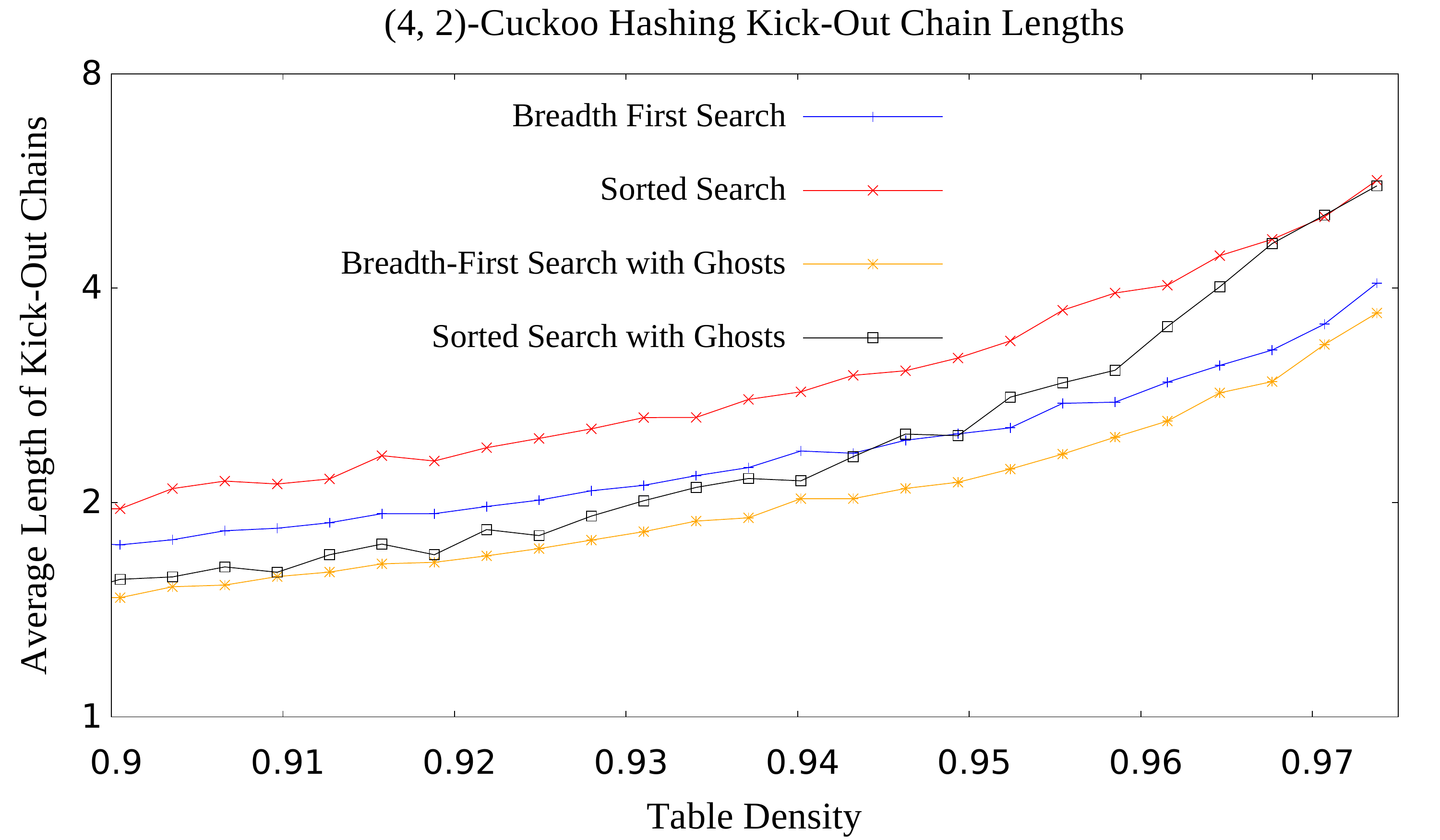}
  \caption{The average length of kick-out chains at varying table
    densities, for various search-based kick-out algorithms.}
  \label{fig_D}
\end{figure}

\begin{figure}[!htb]
  \includegraphics[scale=.3]{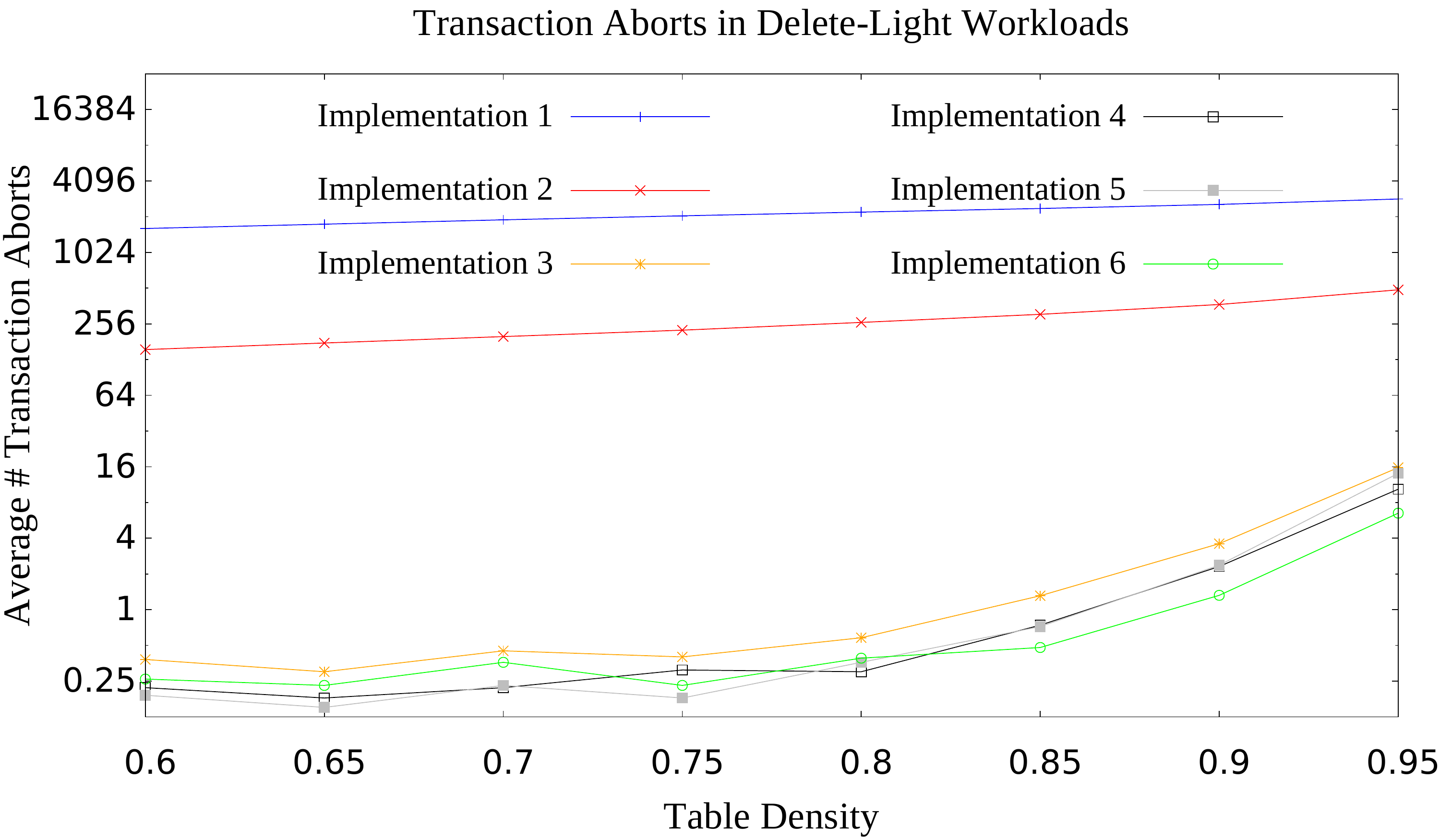}
  \caption{The average number of transaction aborts over the lifetime
    of tables filled to varying densities, for various transactional
    cuckoo hashing implementations, using a delete-light workload and
    15 threads.}
  \label{fig_E}
\end{figure}

\begin{figure}
  \includegraphics[scale=.3]{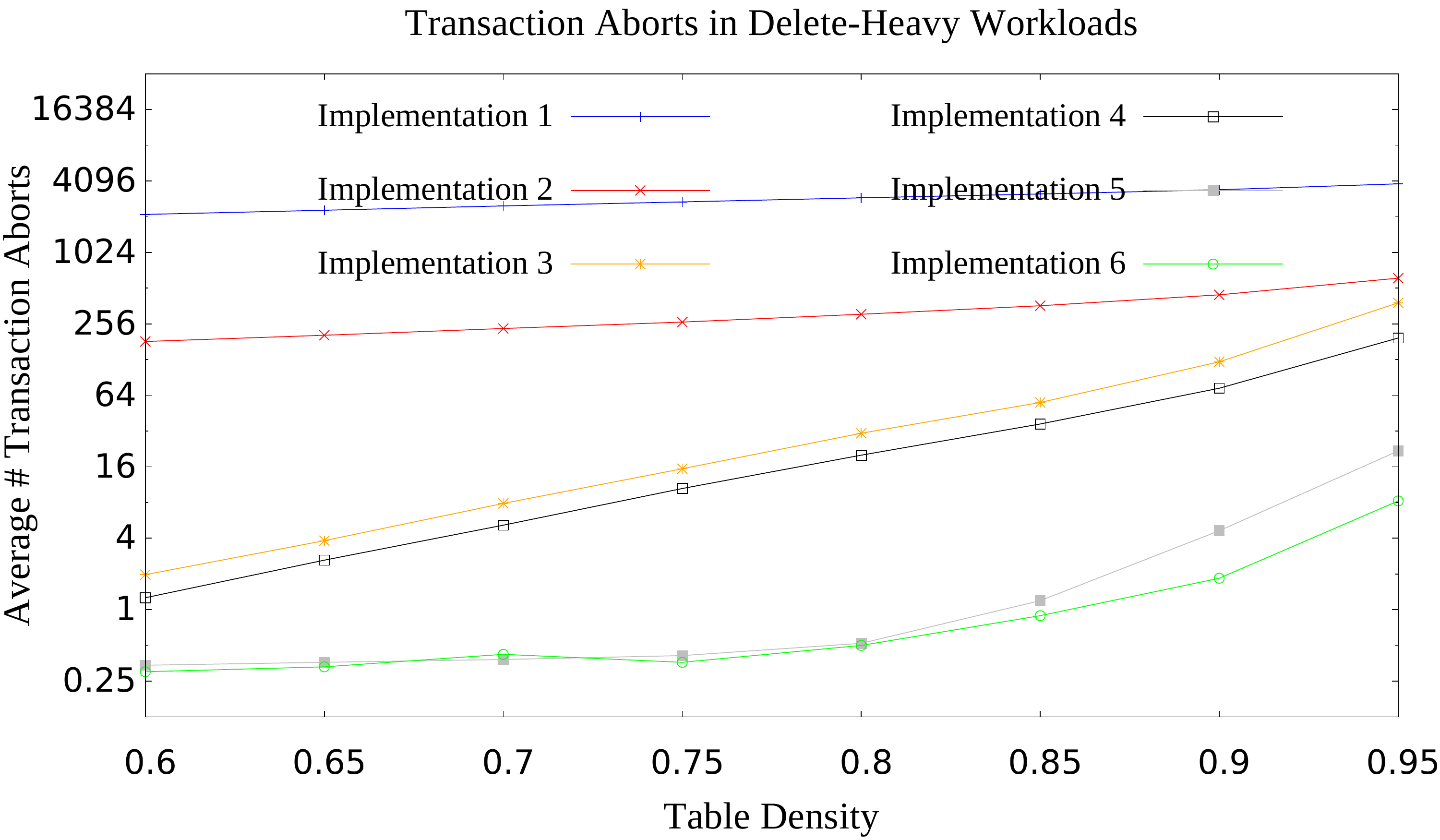}
  \caption{The average number of transaction aborts over the lifetime
    of tables filled to varying densities, for various transactional
    cuckoo hashing implementations, using a delete-heavy workload and
    15 threads.}
  \label{fig_F}
\end{figure}

\end{document}